\newtheorem{theorem}{Theorem}
\newtheorem{corollary}{Corollary}
\newtheorem{remark}{Remark}
\newtheorem{proposition}{Proposition}
\newtheorem{definition}{Definition}
\newtheorem{example}{Example}
\newcommand{\ignore}[1]{}
\newcommand{\oprocendsymbol}{\hbox{$\square$}}
\newcommand{\oprocend}
\DeclareMathAlphabet{\mathcal}{OMS}{cmsy}{m}{n}      
\newcommand{\FF}{\mathcal{F}}
\newcommand{\Eb}{\mathbb{E}}
\newcommand{\Pb}{\mathbb{P}}
\renewcommand{\AA}{\mathcal{A}}
\newcommand{\VV}{\mathcal{V}}
\newcommand{\St}{\mathtt{S}}
\newcommand{\SXt}{\mathtt{SX}}
\newcommand{\SYt}{\mathtt{SY}}
\newcommand{\Xt}{\mathtt{X}}
\newcommand{\Yt}{\mathtt{Y}}
\title{A Generalized SIS Epidemic Model on Temporal Networks with \\ Asymptomatic Carriers and Comments on Decay Ratio}
\author{Ashish~R.~Hota~and~Kavish~Gupta\thanks{The authors are with the Department of Electrical Engineering, Indian Institute of Technology (IIT) Kharagpur, India. E-mail: ahota@ee.iitkgp.ac.in, kavishgupta1999@gmail.com.}}%
\begin{document}

\maketitle

\begin{abstract}
We study the class of SIS epidemics on temporal networks and propose a new activity-driven and adaptive epidemic model that captures the impact of asymptomatic and infectious individuals in the network. In the proposed model, referred to as the A-SIYS epidemic, each node can be in three possible states: susceptible, infected without symptoms or asymptomatic and infected with symptoms or symptomatic. Both asymptomatic and symptomatic individuals are infectious. We show that the proposed A-SIYS epidemic captures several well-established epidemic models as special cases and obtain sufficient conditions under which the disease gets eradicated by resorting to mean-field approximations. 

In addition, we highlight a potential inaccuracy in the derivation of the upper bound on the decay ratio in the activity-driven adaptive SIS (A-SIS) model in \cite{ogura2019optimal} and present a more general version of their result. We numerically illustrate the evolution of the fraction of infected nodes in the A-SIS epidemic model and show that the bound in \cite{ogura2019optimal} often fails to capture the behavior of the epidemic in contrast with our results.
\end{abstract}

\section{Introduction}
\label{section:introduction}

The susceptible-infected-susceptible (SIS) epidemic is one of the most well-studied class of spreading processes on networks \cite{hethcote2000mathematics,pastor2015epidemic}. Early work on SIS epidemics focused on analyzing both deterministic \cite{mei2017dynamics} and stochastic \cite{pastor2015epidemic,van2009virus} dynamic evolution of the epidemic states; often resorting to mean-field approximations for analytical tractability. Most of the existing work has analyzed the epidemic dynamics on static networks; both deterministic as well as large-scale complex networks with a structured population \cite{pastor2015epidemic,van2009virus}.

However, the contact pattern in the human population is dynamic and time-varying. Furthermore, during the prevalence of an infectious disease, individuals often take precautions and reduce their social activities to protect themselves and others from becoming infected. Thus, the characteristics of the network or contact pattern evolves in a time-scale that is comparable to the evolution of the epidemic. Consequently, several recent works have analyzed epidemic processes on temporal or dynamical networks \cite{ogura2016stability,pare2017epidemic,ogura2017optimal,masuda2017temporal,enright2018epidemics,leitch2019toward}. 

In this work, we consider the class of SIS epidemics within the activity-driven network paradigm which is a relatively simple yet expressive paradigm for analyzing the evolution of epidemics and contact pattern in a comparable time-scale \cite{perra2012activity,zino2017analytical}. Our work is motivated by and builds upon the recent works \cite{ogura2019optimal,zino2020assessing} that study SIS epidemics and their close variants on activity-driven networks. Specifically, \cite{ogura2019optimal} defines the discrete-time {\it activity-driven adaptive-SIS model} (activity-driven A-SIS model), derives an analytical upper bound on the {\it decay ratio} of the infection probabilities of the nodes and proposes tractable optimization problems for optimal containment of the epidemic by minimizing the bound on the decay ratio. Similarly, in \cite{zino2020assessing}, the authors study a continuous-time SAIS epidemic with an additional state that captures individuals who are alert and protect themselves from the epidemic. The authors derive conditions for epidemic persistence and investigate optimal policies to mitigate the epidemic by reducing activation probabilities of infected nodes and prompting self-protective behavior. 

Our work is motivated by infectious diseases where a subset of infected individuals do not develop symptoms despite being infectious, i.e., they act as asymptomatic carriers; examples include COVID-19 \cite{hu2020clinical,park2020time} and Ebola \cite{bellan2014ebola}. Such individuals are often not aware of being infected and do not reduce their activity and contact patterns. As a result, such diseases are often challenging to contain. However, the above characteristic is not captured by the classical SIS epidemic model and its well-established variants that have additional states such as alert, exposed, etc. While recent papers \cite{pang2020public,chisholm2018implications} have highlighted the impacts of such asymptomatic carriers on the evolution and control of epidemics, rigorous and quantitative analysis of the above characteristic are few in the existing work on epidemics (on temporal networks).

In this paper, we propose a new activity-driven and adaptive generalized SIS epidemic model, referred to as the A-SIYS epidemic, where we treat asymptomatic and symptomatic individuals as distinct infection states (see Section \ref{section:a-siys} for a formal definition and discussion). Our model captures several well-established epidemic models as special cases. Furthermore, in our setting, each node potentially chooses a different number of nodes to connect to; this is in contrast with the homogeneity assumption in \cite{zino2020assessing,ogura2019optimal}. We derive a linearized dynamics that upper bounds the Markovian evolution of the epidemic states via a mean-field approximation and present sufficient conditions under which the epidemic gets eradicated. Our results and proof techniques are inspired by the analysis in \cite{ogura2019optimal}.

As a second contribution, we highlight a potential inaccuracy in the derivation of the upper bound on the decay ratio of the A-SIS epidemic model in \cite{ogura2019optimal} and obtain a counterpart of their result for a more general setting where nodes choose different numbers of other nodes to connect to (Section \ref{section:asis_error}). We then simulate the epidemic models for various parameter settings and show that the bound obtained in \cite{ogura2019optimal} does not always capture the behavior of the epidemic, in contrast with our results (Section \ref{section:simulation}). We conclude with a discussion on open problems and avenues for future research (Section \ref{sec:discussion}). 

\section{Activity-Driven Adaptive SIYS Epidemic}
\label{section:a-siys}

%%%% Figure %%%%%

\begin{figure}[tb]
\centering
\begin{tikzpicture}[font=\sffamily]

% Setup the style for the states
\tikzset{node style/.style={state, minimum width=1cm, line width=0.2mm, fill=green!60!white}}
\tikzset{node style1/.style={state, minimum width=1cm, line width=0.2mm, fill=red!50!white}}

        % Draw the states
\node[node style] at (0, 0)     (St)     {$\St$};
\node[node style1] at (5, 0)     (At)     {$\Xt$};
\node[node style1] at (2.5, -2) (Yt) {$\Yt$};

        % Connect the states with arrows
        \draw[every loop,
              auto=right,
              line width=0.4mm,
              >=latex,
              draw=black,
              fill=black]
            %(St)     edge[bend right=20]            node {0.1} (Yt)
            (St)     edge[bend right=0, auto=left] node[below] {$\beta_x (\Xt), \beta_y (\Yt)$} (At)
            (At)     edge[bend right=30]            node {$\delta_x$} (St)
            (At) edge node[below] {$\nu$} (Yt)
            %(Yt) edge[bend right=20]            node {0.2} (At)
            (Yt) edge node[below] {$\delta_y$} (St);
    \end{tikzpicture}
\caption{\footnotesize Probabilistic evolution of states in the A-SIYS epidemic model. Self-loops are omitted for better clarity. See Definition \ref{def:siys} for the formal definition. Red indicates that both $\Xt$ and $\Yt$ are infected states.}
\label{fig:siys_tran}
\end{figure}
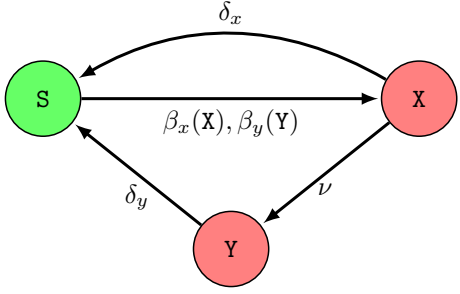

In this section, we formally define the activity-driven adaptive SIYS (A-SIYS) epidemic model. Let $\VV = \{v_1, v_2, \ldots, v_n\}$ denote the set of $n$ nodes. Each node remains in one of the three possible states: susceptible ($\St$), infected without symptoms or asymptomatic ($\Xt$) and infected with symptoms or symptomatic ($\Yt$). Both asymptomatic and symptomatic individuals are infectious, which captures the characteristics of certain epidemics such as COVID-19. 

The states evolve in discrete-time. If at time $t \in \{0,1,\ldots\}$, node $v_i$ is susceptible (respectively, asymptomatic and symptomatic), we denote this by $v_i(t) \in \St$ (respectively, $v_i(t) \in \Xt$ and $v_i(t) \in \Yt$). Given a network or contact pattern, the probabilistic state evolution is defined below.

\smallskip

\begin{definition}\label{def:siys}
Let $\beta_x, \beta_y, \delta_x, \delta_y, \nu \in [0, 1]$ be constants pertaining to infection, recovery and transition rates. The state of each node $v_i$ evolves as follows.
\begin{enumerate}
\item If $v_i(t) \in \St$, then $v_i(t+1) \in \Xt$ with probability $\beta_x$ for each asymptomatic neighbor and with probability $\beta_y$ for each symptomatic neighbor independently of other neighbors. 
\item If $v_i(t) \in \Xt$, then $v_i(t+1) \in \St$ with probability $\delta_x$ and $v_i(t+1) \in \Yt$ with probability $\nu(1-\delta_x)$. 
\item If $v_i(t) \in \Yt$, then $v_i(t+1) \in \St$ with probability $\delta_y$. 
\end{enumerate}
The state remains unchanged otherwise. \hfill \oprocendsymbol
\end{definition}

\smallskip

The possible transitions of the states are illustrated in Figure \ref{fig:siys_tran}. Thus, in our model, both asymptomatic and symptomatic nodes can potentially infect a susceptible node, albeit with different probabilities ($\beta_x$ and $\beta_y$, respectively). Upon being infected, a susceptible node becomes asymptomatic. From there on, it can either get cured and become susceptible with probability $\delta_x$, and if not, it transitions to the symptomatic state with probability $\nu$. Thus $\nu^{-1}$ captures the delay in onset of symptoms. The curing rate for symptomatic nodes is $\delta_y$. Thus, in our model, a node can get infected and cured without ever exhibiting symptoms. 

With the above definition in place, we now formally define the activity-driven and state-dependent evolution of the network or contact pattern and the epidemic states of individual nodes. As discussed above, our model builds upon the formulation in \cite{ogura2019optimal} for the A-SIS epidemic model.

\smallskip

\begin{definition}\label{def:act_siys}
For each node $v_i \in \VV$, let $a_i, \chi_i, \pi_i \in (0,1]$ be constants referred to as the activity rate, adaptation factor and acceptance rate of $v_i$, respectively. Let $m_i \geq 1$ be the number of nodes $v_i$ attempts to connect to upon activation. Let $\beta_x, \beta_y, \delta_x, \delta_y, \nu \in [0, 1]$ be constants pertaining to infection and recovery rates. The A-SIYS model is defined by the following procedures:
\begin{enumerate}
\item At the initial time $t = 0$, each node is in one of the three possible states.
\item At each time $t = 1, 2, \ldots$, each node $v_i$ randomly becomes activated independently of other nodes with the following probability:
\begin{equation*}
\Pb(\text{$v_i$ becomes activated}) \!= \!
\begin{cases}
a_i, \!\!\!& \text{ if $v_i(t) \in \St \cup \Xt$},
\\ \chi_i a_i, \!\!\!& \text{ if } v_i(t) \in \Yt.
\end{cases}
\end{equation*}
\item Node $v_i$, upon activation, randomly and uniformly chooses $m_i$ other nodes independently of other activated nodes. If $v_j$ is chosen by $v_i$, an edge $(v_i,v_j)$ is created with the following probability:
\begin{equation*}
\Pb(\text{edge } (v_i,v_j) \text{ is created}) = 
\begin{cases}
1, \!\!\!& \text{if $v_j(t) \in \St \cup \Xt$},
\\ \pi_j, \!\!\!& \text{if $v_j(t) \in \Yt$}.
\end{cases}
\end{equation*}
These edges are discarded at time $t + 1$. 
\item Once the edges are formed following the above procedure, the states of the nodes get updated following Definition \ref{def:siys}. 
\item Steps 2-4 are repeated for each time $t \geq 1$. \hfill \oprocendsymbol
\end{enumerate} 
\end{definition}

\smallskip

Thus, for node activation and link formation, susceptible and asymptomatic (who are not aware of being infected) nodes behave in an identical manner. When $\chi_i \in (0,1)$, the probability of node $v_i$ getting activated when it is symptomatic is smaller than its activation probability when it is susceptible or asymptomatic. This is potentially due to sickness or reduction of activities by $v_i$ so as to not infect others when it learns that it is infected. Similarly, a symptomatic node is less likely to accept an edge compared to a susceptible or asymptomatic node.

\begin{remark}\label{remark:asiys_special}
The A-SIYS epidemic defined above is quite general and captures the following models as special cases. 
\begin{enumerate}
\item $\nu = 0, v_i(0) \notin \Yt, \forall v_i \in \VV$: In this case, the nodes never enter the symptomatic state, and the activation and acceptance rates no longer state-dependent; the latter parameters are $a_i$ and $1$, respectively. Thus, the epidemic behaves as the classical SIS epidemic on an activity-driven network (but non-adaptive). 
\item $\beta_x = 0, \delta_x = 0$: Here, an asymptomatic node is not infectious and eventually becomes symptomatic before becoming susceptible. In this regime, our model is the activity-driven and adaptive analogue of the SEIS epidemic \cite{hethcote2000mathematics} with $\Xt$ being the ``exposed" state. \hfill \oprocendsymbol
\end{enumerate}
\end{remark}

\subsection{State Evolution and Mean-Field Approximation}

In order to analyze the evolution of the states in the A-SIYS epidemic model, we define random variables $S_i(t), X_i(t)$ and $Y_i(t)$ associated with node $v_i$ that take values in the set $\{0,1\}$. Specifically, we define $S_i(t) = 1$ if $v_i(t) \in \St$, $X_i(t) = 1$ if $v_i(t) \in \Xt$ and $Y_i(t) = 1$ if $v_i(t) \in \Yt$. Since a node can only be in one of three possible states, we have $S_i(t) + X_i(t) + Y_i(t) = 1$. Similarly, we define a $\{0,1\}$-valued random variable $A_{ij}(t)$ which takes value $1$ if the edge $(v_i,v_j)$ exists at time $t$. We also denote by $N_x$ a Bernoulli random variable that takes value $1$ with probability $x \in [0,1]$. The state transition of $v_i$ under the A-SIYS epidemic model can now be formally stated as:
\begin{subequations}\label{eq:asiys_markov}
\begin{align}
& S_i(t+1) = S_i(t) \underset{j \neq i}{\Pi} \left[ 1 - A_{ij}(t) (X_j(t)N_{\beta_x} + Y_j(t)N_{\beta_y}) \right] \nonumber
\\ & \qquad \qquad + N_{\delta_x} X_i(t) + N_{\delta_y} Y_i(t),
\\ & X_i(t+1) = (1-N_{\delta_x})(1-N_{\nu}) X_i(t) + \nonumber
\\ & \quad \!S_i(t) \!\big[1 \!- \underset{j \neq i}{\Pi} \left[ 1 \!- A_{ij}(t) (X_j(t)N_{\beta_x} \!+ Y_j(t)N_{\beta_y}) \!\right]\big], \label{eq:asiys_markov_x}
\\ & Y_i(t+1) = (1-N_{\delta_x})N_{\nu} X_i(t) + (1-N_{\delta_y})Y_i(t). \label{eq:asiys_markov_y}
\end{align}
\end{subequations}
It is easy to see that $S_i(t+1) + X_i(t+1) + Y_i(t+1) = S_i(t) + X_i(t) + Y_i(t) = 1$. We denote the probability of node $v_i$ being susceptible at time $t$ by $s_i(t)$, i.e., $s_i(t) := \Pb(v_i(t) \in \St) = \Eb[S_i(t)]$. The quantities $x_i(t)$ and $y_i(t)$ are defined in an analogous manner. 

Note that the infection states follow a Markov process with a $3^n \times 3^n$ transition probability matrix with the state $S_i(t)=1$ for all $v_i \in \VV$ (i.e., the disease-free state) being the only absorbing state. While analyzing the behavior of this model is computationally intractable, we rely on a mean-field approximation and upper bound the evolution of the infection probability (both asymptomatic and symptomatic) via a linear dynamics. We then derive sufficient conditions under which the epidemic decays to the disease-free state. We start with the following result.

\begin{theorem}\label{thm:asiys_main}
Consider the activity-driven adaptive SIYS (A-SIYS) epidemic model defined in Definition \ref{def:act_siys}.  Let $\bar{m}_i = m_i/(n-1)$, and for all $i, j$, define the constants
\begin{align}
\beta^{ij}_x & := \beta_x [1-(1-a_i \bar{m}_i)(1 - a_j \bar{m}_j)] \label{eq:betaxij},
\\ \beta^{ij}_y & := \beta_y [1-(1-a_i \bar{m}_i \pi_j)(1 - \chi_j a_j \bar{m}_j)]. \label{eq:betayij}
\end{align}
Then,
\begin{subequations}\label{eq:mfa_linear_dynamics}
\begin{align}
& x_i(t+1) \!\leq \delta^c_x (1-\nu) x_i(t) \!+ \!\sum_{j \neq i} \!\left[\beta^{ij}_x x_j(t) \!+\! \beta^{ij}_y y_j(t) \right], 
\\ & y_i(t+1) = \delta^c_x \nu x_i(t) + (1-\delta_y)y_i(t), 
\end{align}
\end{subequations}
for all nodes $v_i$ and $t \geq 0$ with $\delta^c_x = 1-\delta_x$. \hfill \oprocendsymbol
\end{theorem}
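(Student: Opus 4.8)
The plan is to obtain both relations in \eqref{eq:mfa_linear_dynamics} by taking expectations in the exact stochastic recursion \eqref{eq:asiys_markov}, using that every fresh Bernoulli coin drawn at time $t$ (the infection coins $N_{\beta_x}, N_{\beta_y}$, the onset coin $N_\nu$, the recovery coins $N_{\delta_x}, N_{\delta_y}$) and all the activation/link-selection randomness at time $t$ are independent of the state configuration $\FF_t$ at time $t$. The $Y$-equation is the easy case: taking expectations in \eqref{eq:asiys_markov_y}, and noting that $N_{\delta_x}$ and $N_\nu$ are independent of $X_i(t)$ and of each other while $N_{\delta_y}$ is independent of $Y_i(t)$, gives $y_i(t+1) = (1-\delta_x)\nu\, x_i(t) + (1-\delta_y) y_i(t)$, which is exactly the claimed identity with $\delta^c_x = 1-\delta_x$; no approximation enters here.

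For the $X$-equation I would start from \eqref{eq:asiys_markov_x}. The term $(1-N_{\delta_x})(1-N_\nu)X_i(t)$ contributes $\delta^c_x(1-\nu)x_i(t)$ in expectation, exactly. For the remaining term, set $z_j := A_{ij}(t)\big(X_j(t)N_{\beta_x} + Y_j(t)N_{\beta_y}\big)$, which is $\{0,1\}$-valued since at most one of $X_j(t),Y_j(t)$ equals $1$. The Weierstrass/Bonferroni inequality $1 - \prod_{j\neq i}(1-z_j) \leq \sum_{j\neq i} z_j$ together with $S_i(t)\leq 1$ gives $\Eb\big[S_i(t)\big(1-\prod_{j\neq i}(1-z_j)\big)\big] \leq \sum_{j\neq i}\Eb[z_j]$, and since the infection coins attached to edge $(v_i,v_j)$ at time $t$ are independent of both the state configuration and the event that the edge is present, $\Eb[z_j] = \beta_x\,\Eb[A_{ij}(t)X_j(t)] + \beta_y\,\Eb[A_{ij}(t)Y_j(t)]$. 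It then remains to bound these two correlations.

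The crux is computing $\Pb\big(A_{ij}(t)=1 \mid \FF_t\big)$. Conditioned on $\FF_t$, the edge $(v_i,v_j)$ is present iff either (a) $v_i$ activates, includes $v_j$ among its $m_i$ uniform picks, and $v_j$ accepts, or (b) $v_j$ activates, includes $v_i$ among its picks, and $v_i$ accepts; these two events are independent because activations are independent across nodes and the selection is independent across activated nodes, and by symmetry a given node is picked with probability $m_i/(n-1) = \bar m_i$. On the event $\{X_j(t)=1\}$, $v_j$ accepts with probability $1$, and the probabilities of (a) and (b) are at most $a_i\bar m_i$ and $a_j\bar m_j$ respectively, with equality precisely when $v_i\in\St\cup\Xt$ (if $v_i\in\Yt$ both drop, via $\chi_i$ and $\pi_i$); hence $\Pb(A_{ij}(t)=1\mid\FF_t)\leq 1-(1-a_i\bar m_i)(1-a_j\bar m_j)$ on that event, uniformly over the unknown state of $v_i$, which yields $\beta_x\,\Eb[A_{ij}(t)X_j(t)] \leq \beta^{ij}_x x_j(t)$ by \eqref{eq:betaxij} after multiplying through by $X_j(t)$ and taking expectations. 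On $\{Y_j(t)=1\}$ the identical enumeration, now with $v_j$'s acceptance probability $\pi_j$ and activation probability $\chi_j a_j$, and again worst-casing over $v_i$, gives $\beta_y\,\Eb[A_{ij}(t)Y_j(t)] \leq \beta^{ij}_y y_j(t)$ by \eqref{eq:betayij}. Summing over $j\neq i$ and adding the $\delta^c_x(1-\nu)x_i(t)$ term yields the asserted inequality. The main obstacle is exactly this last step: one must carefully account for the two distinct ways an edge can form and their independence (so that the tighter product bound $1-(1-a_i\bar m_i)(1-a_j\bar m_j)$, rather than the cruder union bound, is justified), ensure the estimate holds uniformly over the unobserved state of the receiving node $v_i$, and keep the conditioning on $\FF_t$ clean so that the time-$t$ coins genuinely factor out.
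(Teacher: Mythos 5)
Your proof is correct and follows essentially the same route as the paper: take expectations of the exact stochastic recursion (the $y$-equation being exact), apply the Weierstrass product inequality, and decompose the formation of edge $(v_i,v_j)$ into two independent directed attempts whose probabilities are read off from Definition~\ref{def:act_siys}, yielding the same constants $\beta^{ij}_x$ and $\beta^{ij}_y$. The only cosmetic difference is where the slack enters: you discard $S_i(t)\le 1$ up front and bound $\Pb(A_{ij}(t)=1)$ uniformly over the unobserved state of $v_i$, whereas the paper retains $S_i(t)$, conditions on the joint event that $v_i$ is susceptible and $v_j$ is asymptomatic (resp.\ symptomatic) so the conditional edge probability is exact, and then relaxes the joint probability to the marginal $x_j(t)$ (resp.\ $y_j(t)$) --- both give the identical bound.
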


\begin{proof}
We compute expectation on both sides of \eqref{eq:asiys_markov_x} and \eqref{eq:asiys_markov_y} and obtain
\begin{subequations}\label{eq:asiys_mfa}
\begin{align}
& x_i(t+1) = (1-\delta_x)(1-\nu) x_i(t) + \nonumber
\\ & \quad \Eb\Big[S_i(t) \big[1 \!- \!\underset{j \neq i}{\Pi} \!\left[1 \!- A_{ij}(t) (X_j(t)N_{\beta_x} \!+ Y_j(t)N_{\beta_y}) \!\right]\big]\Big], \label{eq:asiys_mfa_x}
\\ & y_i(t+1) = (1-\delta_x)\nu x_i(t) + (1-\delta_y)y_i(t),
\end{align}
\end{subequations}
and $s_i(t) = 1 - x_i(t) - y_i(t)$. For the product term in the R.H.S. of \eqref{eq:asiys_mfa_x}, the Weierstrass product inequality yields
\begin{align*}
& 1 - \underset{j \neq i}{\Pi} \left[ 1 - A_{ij}(t) (X_j(t)N_{\beta_x} + Y_j(t)N_{\beta_y}) \right] 
\\ & \qquad \leq \sum_{j\neq i} A_{ij}(t) (X_j(t)N_{\beta_x} + Y_j(t)N_{\beta_y}). 
\end{align*}
Consequently, we have
\begin{align}
& \Eb\Big[S_i(t) \big[1 - \underset{j \neq i}{\Pi} \left[ 1 - A_{ij}(t) (X_j(t)N_{\beta_x} + Y_j(t)N_{\beta_y}) \right]\big]\Big] \nonumber
\\ & \leq \!\sum_{j \neq i} \!\beta_x \Eb[A_{ij}(t) S_i(t) X_j(t)] \!+\! \beta_y \Eb[A_{ij}(t) S_i(t) Y_j(t)]. \label{eq:ineq_main}
\end{align}

We now focus on evaluating the expectation terms in the above equation. Recall that $A_{ij}(t)$ is a random variable that indicates the presence of the edge $(v_i,v_j)$ at time $t$ and is governed by the states of nodes $v_i$ and $v_j$ according to Definition \ref{def:act_siys}. In order to bound the expectation terms, we introduce the following notation for events of interest:
\begin{align}
& \SXt^t_{ij} = ``v_i(t) \in \St \text{ and } v_j(t) \in \Xt,"
\\ & \SYt^t_{ij} = ``v_i(t) \in \St \text{ and } v_j(t) \in \Yt,"
\\ & \Gamma^t_{i\to j} = \text{``$v_i$ is activated, chooses } v_j \text{ as neighbor at t."} \label{eq:def_Gammatij}
\end{align}
With the above notation in place, we have
\begin{subequations}
\label{eq:sub_cond_exp}
\begin{align}
\Eb[A_{ij}(t) S_i(t) X_j(t)] & = \Pb(A_{ij}(t)=1|\SXt^t_{ij})\Pb(\SXt^t_{ij}), \label{eq:sub_cond_exp1}
\\ \Eb[A_{ij}(t) S_i(t) Y_j(t)] & = \Pb(A_{ij}(t)=1|\SYt^t_{ij})\Pb(\SYt^t_{ij}). \label{eq:sub_cond_exp2}
\end{align}
\label{eq:sub_cond_exp}
\end{subequations}
We now focus on the first equation above and note that
\begin{align*}
& \Pb(A_{ij}(t)=1|\SXt^t_{ij}) = \Pb(\Gamma^t_{i \to j} | \SXt^t_{ij}) + \Pb(\Gamma^t_{j \to i} | \SXt^t_{ij}) 
\\ & \qquad \quad - \Pb(\Gamma^t_{i \to j} | \SXt^t_{ij}) \Pb(\Gamma^t_{j \to i} | \SXt^t_{ij})
\\ & \qquad = 1-[1-\Pb(\Gamma^t_{i \to j} | \SXt^t_{ij})][1-\Pb(\Gamma^t_{j \to i} | \SXt^t_{ij})].
\end{align*}
Since the event $\SXt^t_{ij}$ states that $v_i$ is susceptible and $v_j$ is infected without symptoms, according to Definition \ref{def:act_siys}, the adaptation and acceptance of node $v_j$ is same as the case when it is susceptible. Therefore, we have
\begin{align*}
\Pb(\Gamma^t_{i \to j} | \SXt^t_{ij}) & = a_i \bar{m}_i, \qquad \Pb(\Gamma^t_{j \to i} | \SXt^t_{ij}) = a_j \bar{m}_j.
\end{align*}
Similarly for events conditioned on $\SYt^t_{ij}$, we have 
\begin{align*}
\Pb(A_{ij}(t)=1|\SYt^t_{ij}) & = 1-[1-\Pb(\Gamma^t_{i \to j} | \SYt^t_{ij})] \times
\\ & \qquad [1-\Pb(\Gamma^t_{j \to i} | \SYt^t_{ij})].
\end{align*}
The event $\SYt^t_{ij}$ corresponds to $v_i$ being susceptible and $v_j$ being infected with symptoms. Therefore, the adaptation and acceptance of $v_j$ depend on the parameters $\chi_j$ and $\pi_j$, respectively. Therefore, following Definition \ref{def:act_siys}, we have 
\begin{align*}
\Pb(\Gamma^t_{i \to j} | \SYt^t_{ij}) & = a_i \bar{m}_i \pi_j, \qquad \Pb(\Gamma^t_{j \to i} | \SYt^t_{ij}) = \chi_j a_j \bar{m}_j.
\end{align*}
Finally, we note that
\begin{align*}
\Pb(\SXt^t_{ij}) \leq x_j(t), \qquad \Pb(\SYt^t_{ij}) \leq y_j(t),
\end{align*}
since the event $v_j(t) \in \Xt$ (respectively, $v_j(t) \in \Yt$) subsumes the event $\SXt^t_{ij}$ (respectively, $\SYt^t_{ij}$). 

Substituting the above bounds and the expressions for the conditional probabilities obtained in \eqref{eq:sub_cond_exp}, we obtain
\begin{align*}
& \Eb[A_{ij}(t) S_i(t) X_j(t)] \leq [1-(1-a_i \bar{m}_i)(1 - a_j \bar{m}_j)] x_j(t), 
\\ & \Eb[A_{ij}(t) S_i(t) Y_j(t)] \leq \![1\!-(1\!-a_i \bar{m}_i \pi_j)(1 \!- \chi_j a_j \bar{m}_j)] y_j(t).
\end{align*}
The result now follows upon substituting the above expressions in \eqref{eq:ineq_main} and the definition of $\beta^{ij}_x$ and $\beta^{ij}_y$.
\end{proof}

The above result shows that the evolution of the probability of a node being asymptomatic and symptomatic is upper bounded by a linear dynamics as stated in \eqref{eq:mfa_linear_dynamics}. The linearized dynamics can be stated in a compact manner as follows. Let $z(t) := [x(t)^\top \quad y(t)^\top]^\top \in [0,1]^{2n}$ be the vector of probabilities corresponding to the infected states. From the above theorem, we have 
\begin{equation}
z(t+1) \leq \left[
\begin{array}{c|c}
\AA_{xx} & \AA_{xy} \\
\hline
\AA_{yx} & \AA_{yy}
\end{array}
\right] z(t) =: \AA z(t), 
\end{equation}
where each sub-matrix has dimension $n \times n$. Specifically, $\AA_{xx}$ has diagonal entries $(1-\delta_x)(1-\nu)$ and $(i,j)$-th entry as $\beta^{ij}_x$ for $j \neq i$, $\AA_{xy}$ has diagonal entries $0$ and $\beta^{ij}_y$ as the $(i,j)$-th entry with $j \neq i$, $\AA_{yx} := \mathtt{diag}((1-\delta_x) \nu)$, and $\AA_{yy} := \mathtt{diag}(1-\delta_y)$. Consequently, we obtain a sufficient condition, stated below, under which the disease is eradicated.

\begin{theorem}\label{thm:asiys_sufficient}
The decay ratio of the epidemic
\begin{align*}
\alpha & := \inf\{\gamma: \text{ there exists } C > 0 \text{ such that } ||z(t)|| \leq C\gamma^t \\ & \quad \qquad \text{ for all } t \geq 0 \text{ and } z(0)\}
\end{align*} 
is upper bounded as $\alpha \leq \rho(\AA)$ where $\rho(\AA)$ is the spectral radius of $\AA$. In particular, if $\rho(\AA) < 1$, $\underset{t\to\infty}{\lim} ||z(t)|| = 0$.  \hfill \oprocendsymbol
\end{theorem}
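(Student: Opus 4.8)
The plan is to exploit two structural facts: the matrix $\AA$ is entrywise nonnegative (each off-diagonal entry is a product of probabilities, and each diagonal entry is of the form $(1-\delta_x)(1-\nu)$, $(1-\delta_x)\nu$, or $1-\delta_y$, all in $[0,1]$), and $z(t)\ge 0$ componentwise for every $t$ since its entries are probabilities. Because $\AA\ge 0$, the linear map $u\mapsto\AA u$ is monotone on $\real^{2n}$: if $u\le v$ componentwise then $\AA u\le \AA v$. Starting from $z(t+1)\le \AA z(t)$ and iterating, $z(1)\le \AA z(0)$, then $z(2)\le \AA z(1)\le \AA^2 z(0)$, and by induction $0\le z(t)\le \AA^t z(0)$ for all $t\ge 0$ and all admissible $z(0)$. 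Note it is an inequality, not an identity, that is being iterated here, so the nonnegativity of $\AA$ is what makes the iteration legitimate.

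Next I would convert this componentwise bound into a bound on $\|z(t)\|$. Using that $z(t)\ge 0$ and a monotone norm — or simply $\|\cdot\|_\infty$, since all norms on $\real^{2n}$ are equivalent — we obtain $\|z(t)\|\le c_1\|z(t)\|_\infty\le c_1\|\AA^t z(0)\|_\infty\le c_1\|\AA^t\|_\infty\|z(0)\|_\infty$. Because $z(0)\in[0,1]^{2n}$ we have $\|z(0)\|_\infty\le 1$, so $\|z(t)\|\le c_1\|\AA^t\|_\infty$ with a constant that does not depend on the initial condition. Now invoke Gelfand's formula $\rho(\AA)=\lim_{t\to\infty}\|\AA^t\|_\infty^{1/t}$: fix any $\gamma>\rho(\AA)$; there is $T$ with $\|\AA^t\|_\infty\le\gamma^t$ for all $t\ge T$, and setting $C:=c_1\max\bigl\{1,\ \max_{0\le s<T}\|\AA^s\|_\infty\gamma^{-s}\bigr\}$ gives $\|z(t)\|\le C\gamma^t$ for all $t\ge 0$ and all $z(0)$. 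Hence $\gamma$ lies in the set defining $\alpha$, so $\alpha\le\gamma$; letting $\gamma\downarrow\rho(\AA)$ yields $\alpha\le\rho(\AA)$.

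For the last claim, if $\rho(\AA)<1$ I would simply choose $\gamma$ with $\rho(\AA)<\gamma<1$ in the argument above, so that $\|z(t)\|\le C\gamma^t\to 0$ as $t\to\infty$. The argument is essentially routine; the only points needing care are (i) using $\AA\ge 0$ to propagate the inequality through the iteration, and (ii) ensuring the constant $C$ is independent of $z(0)$, which is possible precisely because the initial conditions range over the bounded set $[0,1]^{2n}$. I do not anticipate a substantive obstacle.
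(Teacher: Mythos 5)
Your proof is correct and is precisely the "standard argument" the paper invokes: the paper omits the proof entirely, saying it "follows from the above discussion and standard arguments," and your argument (iterating the componentwise inequality using the nonnegativity of $\AA$, then applying Gelfand's formula with a constant uniform over $z(0)\in[0,1]^{2n}$) is exactly what that omission gestures at. The two points you flag as needing care — monotonicity of $u\mapsto\AA u$ to propagate the inequality, and independence of $C$ from the initial condition — are indeed the only nontrivial steps, and you handle both correctly.
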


The proof follows from the above discussion and standard arguments and is omitted in the interest of space. Note further that $\AA$ is a non-negative irreducible (since each node can potentially choose any other node to connect to) matrix. Thus, $\rho(\AA)$ corresponds to its largest eigenvalue which is real and positive following the Perron-Forbenius theorem \cite{horn2012matrix}. 

We now state the following corollaries of the above results that correspond to certain special cases of our model. 

\begin{corollary}
Suppose $\nu = 0$ and $v_i(0) \notin \Yt, \forall v_i \in \VV$. Then, $y_i(t) = 0$ for all $t \geq 0$ and 
\begin{equation*}
x_i(t) \leq (1-\delta_x) x_i(t) + \sum_{j \neq i} \beta^{ij}_x x_j(t). \end{equation*}
Furthermore, $\alpha \leq \rho(\AA_{xx})$. \hfill \oprocendsymbol
\end{corollary}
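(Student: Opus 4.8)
The plan is to leverage the Markov recursion \eqref{eq:asiys_markov_y}, the mean-field bound \eqref{eq:mfa_linear_dynamics}, and Theorem~\ref{thm:asiys_sufficient}, specializing all three to the regime $\nu = 0$ with no initially symptomatic node.

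First I would establish that the symptomatic state is never reached. Setting $\nu = 0$ in \eqref{eq:asiys_markov_y} gives $Y_i(t+1) = (1-N_{\delta_x})N_0 X_i(t) + (1-N_{\delta_y})Y_i(t)$; since $N_0 = 0$ almost surely, this reduces to $Y_i(t+1) = (1-N_{\delta_y})Y_i(t)$. The hypothesis $v_i(0) \notin \Yt$ means $Y_i(0) = 0$ for every $v_i \in \VV$, so a trivial induction on $t$ yields $Y_i(t) = 0$ almost surely for all $t \ge 0$, hence $y_i(t) = \Eb[Y_i(t)] = 0$. Substituting $\nu = 0$ and $y_j(t) = 0$ into the first inequality of \eqref{eq:mfa_linear_dynamics} (with $\delta^c_x = 1-\delta_x$) then immediately gives $x_i(t+1) \le (1-\delta_x)x_i(t) + \sum_{j\neq i}\beta^{ij}_x x_j(t)$, the claimed scalar recursion.

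For the decay-ratio statement I would note that, since $y(t) \equiv 0$, we have $z(t) = [x(t)^\top \ 0^\top]^\top$, so $\norm{z(t)} = \norm{x(t)}$ and the relevant dynamics are governed by the block $\AA_{xx}$ alone: in vector form the previous display reads $0 \le x(t+1) \le \AA_{xx}\, x(t)$ componentwise, where nonnegativity of $x(t+1)$ holds because it is a vector of probabilities. Since $\AA_{xx}$ is entrywise nonnegative, it preserves the componentwise order on the nonnegative orthant, so iterating gives $0 \le x(t) \le \AA_{xx}^t\, x(0)$ componentwise, and therefore $\norm{x(t)} \le \norm{\AA_{xx}^t}\,\norm{x(0)}$. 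By Gelfand's formula, for any $\gamma > \rho(\AA_{xx})$ there is $C > 0$ with $\norm{\AA_{xx}^t} \le C\gamma^t$ for all $t$, so $\norm{z(t)} \le C\,\norm{x(0)}\,\gamma^t$; taking the infimum over such $\gamma$ gives $\alpha \le \rho(\AA_{xx})$. Equivalently, one may invoke Theorem~\ref{thm:asiys_sufficient} directly and observe that the block-triangular structure of $\AA$, with $\AA_{yx}$ and $\AA_{yy}$ acting only on the identically-zero $y$-component, makes $\rho(\AA_{xx})$ the only eigenvalue relevant to the decay of $z(t)$ once $y(0) = 0$.

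There is no serious obstacle here; the statement is a direct specialization of the preceding results. The only point requiring a word of care is the passage from the componentwise inequality $x(t+1) \le \AA_{xx}\, x(t)$ to $x(t) \le \AA_{xx}^t\, x(0)$, which uses both the nonnegativity of the iterates (guaranteed since they are probabilities) and the nonnegativity of $\AA_{xx}$ (so that the order is preserved under multiplication) — exactly the same monotonicity argument underlying Theorem~\ref{thm:asiys_sufficient}.
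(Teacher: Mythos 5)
Your proposal is correct and follows exactly the route the paper intends: the corollary is stated as a direct specialization of Theorem~\ref{thm:asiys_main} (setting $\nu=0$ and using $Y_i(0)=0$ to kill the symptomatic component) together with the standard nonnegative-matrix/Gelfand argument underlying Theorem~\ref{thm:asiys_sufficient}. The only point worth flagging is typographical rather than mathematical: the left-hand side of the displayed inequality in the corollary should read $x_i(t+1)$, as you correctly wrote it.
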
 

The above setting corresponds to the classical SIS epidemic on an activity-driven network discussed in Remark \ref{remark:asiys_special}. Note from the definition of $\beta^{ij}_x$ that the matrix $\AA_{xx}$ consists of a diagonal matrix and a matrix of rank $2$, and consequently, its spectral radius can be explicitly derived. 

When symptomatic individuals completely stop interacting with others, we have the following corollary. 

\begin{corollary}
Suppose $\chi_i = 0$ and $\pi_i = 0$ for all the nodes. Then, $\beta^{ij}_y = 0$ and we have
\begin{equation*}
z(t+1) \leq \left[
\begin{array}{c|c}
\AA_{xx} & \mathbf{0}_{n \times n} \\
\hline
\AA_{yx} & \AA_{yy}
\end{array}
\right] z(t), 
\end{equation*}
where $\mathbf{0}_{n \times n}$ has all entries equal to $0$. The bound on the decay ratio is given by $\alpha \leq \max(1-\delta_y, \rho(\AA_{xx}))$. \hfill \oprocendsymbol
\end{corollary}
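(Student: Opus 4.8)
The plan is to derive the statement directly from Theorems \ref{thm:asiys_main} and \ref{thm:asiys_sufficient} by specializing the parameters, so essentially no new machinery is needed. First I would substitute $\chi_j = 0$ and $\pi_j = 0$ into the definition \eqref{eq:betayij}, obtaining $\beta^{ij}_y = \beta_y\bigl[1 - (1 - 0)(1 - 0)\bigr] = 0$ for every pair $i,j$. Since the off-diagonal entries of $\AA_{xy}$ are precisely the $\beta^{ij}_y$ and its diagonal entries are already zero, this forces $\AA_{xy} = \mathbf{0}_{n \times n}$. Plugging this into the linearized comparison inequality $z(t+1) \le \AA z(t)$ obtained from Theorem \ref{thm:asiys_main} immediately yields the block lower-triangular bound stated in the corollary.

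The second step is to evaluate $\rho(\AA)$ for this specialized $\AA$. Because $\AA$ is block triangular, its characteristic polynomial factors as the product of those of $\AA_{xx}$ and $\AA_{yy}$, so its spectrum is the union of the two spectra and $\rho(\AA) = \max\bigl(\rho(\AA_{xx}), \rho(\AA_{yy})\bigr)$. Since $\AA_{yy} = \diag(1-\delta_y)$ is a scalar multiple of the identity, $\rho(\AA_{yy}) = 1 - \delta_y$, and hence $\rho(\AA) = \max(1-\delta_y, \rho(\AA_{xx}))$. Applying Theorem \ref{thm:asiys_sufficient} then gives $\alpha \le \rho(\AA) = \max(1-\delta_y, \rho(\AA_{xx}))$, as claimed.

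The only subtlety — and the point I would flag explicitly — is that once $\chi_i = \pi_i = 0$, the matrix $\AA$ is no longer irreducible (there is no path in its associated digraph from the $x$-indices to the $y$-indices), so the Perron--Frobenius reasoning invoked for the general case does not apply verbatim. This does not affect the corollary: the bound $\alpha \le \rho(\AA)$ in Theorem \ref{thm:asiys_sufficient} rests only on the comparison principle for non-negative linear recursions together with the elementary estimate $\|\AA^t\| = O(p(t)\,\rho(\AA)^t)$ for some polynomial $p$, and the block-triangular spectral identity used above is purely a fact about characteristic polynomials; neither requires irreducibility. I do not expect any genuine difficulty beyond making these two observations explicit.
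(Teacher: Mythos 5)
Your proposal is correct and follows exactly the route the paper intends: the corollary is stated as an immediate specialization of Theorems \ref{thm:asiys_main} and \ref{thm:asiys_sufficient} (setting $\chi_j=\pi_j=0$ kills $\beta^{ij}_y$, hence $\AA_{xy}=\mathbf{0}$, and the block-triangular structure gives $\rho(\AA)=\max(1-\delta_y,\rho(\AA_{xx}))$), which is precisely your argument. Your explicit observation that the specialized $\AA$ is no longer irreducible, but that the bound $\alpha\leq\rho(\AA)$ does not rely on irreducibility, is a worthwhile clarification that the paper leaves implicit.
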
 

The above regime corresponds to situations where symptomatic individuals are kept in strict isolation. Our analysis shows that even when the decay ratio pertaining to interaction among nodes (corresponding to $\rho(\AA_{xx})$) is small, the epidemic eradication rate (dominated by $1-\delta_y$) can be slow if the recovery rate $\delta_y$ is sufficiently small. 

As discussed earlier, the proposed model and the above analysis is motivated by and builds upon the Activity-Driven Adaptive SIS (A-SIS) epidemic proposed in \cite{ogura2019optimal}. In the following section, we highlight a potential inaccuracy in the derivation of the upper bound on the decay ratio in \cite{ogura2019optimal}. 

\section{Activity-Driven A-SIS Epidemic Model and Potential Inaccuracy in the Analysis of \cite{ogura2019optimal}}\label{section:asis_error}

The activity-driven A-SIS epidemic is not a special case of the A-SIYS epidemic studied above, but is closely related. In the A-SIS epidemic defined in \cite{ogura2019optimal}, a node $v_i$ is either susceptible or infected. A susceptible node becomes infected with probability $\beta \in [0,1]$ when it comes in contact with an infected node (independently of other infected nodes) and an infected node recovers with probability $\delta \in [0,1]$. Thus, the state transition is a special case of Definition \ref{def:siys} when $\nu=0, \delta_x = \delta$ and $\beta_x = \beta$ and $\Xt$ denoting the infected state.

The models differ in the activity-driven adaptive network formation process. While the A-SIYS epidemic distinguishes between asymptomatic and symptomatic infections, the A-SIS epidemic does not. Specifically, upon infection, node $v_i$ adjusts its activation and acceptance probabilities with the factors $\chi_i$ and $\pi_i$, respectively as shown in points 2 and 3 in Definition \ref{def:act_siys}. Furthermore, each node upon activation chooses $m$ other nodes to connect to, i.e., $m_i = m$ for all $v_i \in \VV$. The rest of the steps are identical to those in Definition \ref{def:act_siys}. 

We follow the terminology in \cite{ogura2019optimal} and model the state of node $v_i$ at time $t$ as a random variable
\begin{equation}
x_i(t) := 
\begin{cases}
0 \qquad & \text{if $v_i$ is susceptible at time $t$},
\\ 1 \qquad & \text{if $v_i$ is infected at time $t$}.
\end{cases}
\end{equation}
Similarly, we define $p_i(t) := \Pb(v_i \text{ is infected at time $t$})$ and the vector of infection probabilities for all nodes as $p(t)$. The authors in \cite{ogura2019optimal} define {\it decay ratio} as follows. 

\smallskip

\begin{definition}[Definition 3.1 \cite{ogura2019optimal}]
We define the decay ratio of the activity-driven A-SIS model by
\begin{align*}
\alpha & = \inf\{\gamma: \text{ there exists } C > 0 \text{ such that } ||p(t)|| \leq C\gamma^t \\ & \quad \qquad \text{ for all } t \geq 0 \text{ and } x(0)\}. 
\end{align*} 
\end{definition}

\smallskip

The quantity $\alpha$ captures the persistence of infection among the nodes. In the A-SIS model, the states follow a Markov process and the actual decay ratio is the spectral radius of the $2^n \times 2^n$ transition probability matrix. The authors in \cite{ogura2019optimal} upper bound the evolution of $p_i(t)$ by a linear dynamics and obtain an explicit upper bound on the decay ratio by noting that it is the spectral radius of a matrix of rank $2$.  

However, we believe that the derivation of the linearized dynamics in Proposition 3.2 in \cite{ogura2019optimal} is inaccurate. We start our discussion by first stating Proposition 3.2 from \cite{ogura2019optimal}. 

\begin{proposition}[Proposition 3.2 \cite{ogura2019optimal}]
Let $\bar{m} = m/(n-1), \delta^c = 1-\delta$, and for all $i$, define the constants
$$ \phi_i = \bar{m} \chi_i a_i, \qquad \psi_i = \bar{m} \pi_i a_i. $$
Then,
\begin{equation}
p_i(t+1) \leq \delta^c p_i(t) + \beta \sum^n_{j=1} [1-(1-\psi_i)(1-\phi_j)]p_j(t)
\end{equation}
for all nodes $v_i$ and $t \geq 0$. \hfill \oprocendsymbol
\end{proposition}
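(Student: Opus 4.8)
The plan is to mirror the proof of Theorem~\ref{thm:asiys_main}, specialized to the A-SIS parameters $\nu = 0$, $\delta_x = \delta$, $\beta_x = \beta$, and $m_i = m$. Since each $x_i(t)$ is $\{0,1\}$-valued and there is no symptomatic state, the Markovian update analogous to \eqref{eq:asiys_markov} is
\begin{equation*}
x_i(t+1) = (1 - N_\delta)\, x_i(t) + \big(1 - x_i(t)\big)\Big[1 - \prod_{j \neq i}\big(1 - A_{ij}(t)\, x_j(t)\, N_\beta\big)\Big].
\end{equation*}
Taking expectations and using that $N_\delta$ is independent of the remaining variables gives $p_i(t+1) = \delta^c p_i(t) + \Eb\big[(1-x_i(t))\big(1 - \prod_{j\neq i}(1 - A_{ij}(t) x_j(t) N_\beta)\big)\big]$.

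I would then apply the Weierstrass product inequality, exactly as in the proof of Theorem~\ref{thm:asiys_main}, to bound $1 - \prod_{j\neq i}(1 - A_{ij}(t) x_j(t) N_\beta) \le \sum_{j\neq i} A_{ij}(t) x_j(t) N_\beta$, and, since $N_\beta$ is independent of the node states and of the realized edges, factor out $\beta$ to obtain $p_i(t+1) \le \delta^c p_i(t) + \beta \sum_{j\neq i} \Eb[(1-x_i(t)) x_j(t) A_{ij}(t)]$. The product $(1-x_i(t))x_j(t)$ is the indicator of the event that $v_i$ is susceptible and $v_j$ is infected at time $t$; conditioning on this event, each summand equals $\Pb(A_{ij}(t)=1 \mid v_i \text{ susceptible}, v_j \text{ infected})\cdot\Pb(v_i \text{ susceptible}, v_j \text{ infected})$, and the second factor is at most $p_j(t)$ (the same mean-field relaxation used in Theorem~\ref{thm:asiys_main}). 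Then $\Pb(A_{ij}(t)=1\mid\cdot)$ is computed from the two independent directed link-formation attempts as $1 - (1-q_{i\to j})(1-q_{j\to i})$, where $q_{i\to j}$ and $q_{j\to i}$ denote the conditional probabilities of forming the edge through $v_i$ reaching out to $v_j$ and through $v_j$ reaching out to $v_i$, respectively; finally, extending the resulting sum from $j\neq i$ to $j = 1,\dots,n$ only adds a nonnegative term and preserves the inequality.

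The step I expect to be the main obstacle is the evaluation of $q_{i\to j}$ and $q_{j\to i}$: one must attribute the activation, selection, and acceptance factors to the correct endpoint and in the correct state. Conditioned on $v_i$ susceptible and $v_j$ infected, $v_i$ activates and selects $v_j$ with probability $a_i \bar{m}$ — with no adaptation factor, since $v_i$ is susceptible — and $v_j$ accepts with probability $\pi_j$, so $q_{i\to j} = a_i \bar{m}\,\pi_j$; conversely $v_j$ activates and selects $v_i$ with probability $\chi_j a_j \bar{m}$ and $v_i$, being susceptible, accepts with probability $1$, so $q_{j\to i} = \chi_j a_j \bar{m}$. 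This yields the coefficient $1 - (1 - a_i \bar{m}\,\pi_j)(1 - \chi_j a_j \bar{m})$, in which the acceptance rate carries the index $j$ of the infected node rather than the index $i$, in contrast with the factor $\psi_i = \bar{m}\,\pi_i a_i$ appearing in the stated proposition. Pinning down this indexing, and making explicit which events the acceptance factors are conditioned on, is where the care lies and, I anticipate, is exactly the point taken up in the discussion that follows.
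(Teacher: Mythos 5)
You have not actually proved the statement as quoted, and that is the right outcome: the paper reproduces Proposition 3.2 of \cite{ogura2019optimal} precisely in order to argue that it is \emph{inaccurate}, and gives no proof of it. Your derivation is sound up to and including the evaluation of the edge-formation probability, but the coefficient it produces is $1-(1-a_i\bar m\,\pi_j)(1-\chi_j a_j\bar m)$, in which the acceptance rate is that of the infected node $v_j$. The stated proposition instead has $\psi_i=\bar m\,\pi_i a_i$, i.e.\ the acceptance rate of the susceptible node $v_i$, and no step of your (correct) argument delivers that substitution. So what you have written is a proof of the corrected inequality --- essentially the omitted proof of Theorem~\ref{thm:asis_main} specialized to $m_i=m$, where the coefficient is written as $1-(1-\psi_{ij})(1-\phi_j)$ with $\psi_{ij}=a_i\bar m\,\pi_j$ --- rather than a proof of the proposition you were asked about. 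Your closing observation that ``the acceptance rate carries the index $j$ of the infected node rather than the index $i$'' is exactly the potential inaccuracy the paper identifies in equation (3.19) of \cite{ogura2019optimal}.

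Two smaller points. First, be explicit that each $N_\beta$ in the product over $j$ is an independent copy per neighbor; writing a single $N_\beta$ (as the paper also does) is harmless here because you immediately apply the Weierstrass inequality and take expectations term by term, but the independence across neighbors is what justifies the per-edge factorization. Second, when you extend the sum from $j\neq i$ to $j=1,\dots,n$ you correctly note this only weakens the bound; the paper points out the sharper fact that the $i=j$ term is exactly zero since the event ``$v_i$ is simultaneously susceptible and infected'' is empty, which is why both Theorem~\ref{thm:asiys_main} and Theorem~\ref{thm:asis_main} keep the sum over $j\neq i$. If your goal is the tightest linearized dynamics, keep the restricted sum.
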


\smallskip 

\noindent {\bf Potential inaccuracy in Proposition 3.2 \cite{ogura2019optimal}:} 

\smallskip 

\noindent The proof of Proposition 3.2 in \cite{ogura2019optimal} follows largely analogous steps as the proof of Theorem \ref{thm:asiys_main} above; the main distinction being the absence of terms related to $Y_j(t)$ in \cite{ogura2019optimal}. We believe that the evaluation of $\Pb(\Gamma^t_{i \to j}|\Xi^t_{i,j})$ in equation (3.19) in the proof in \cite{ogura2019optimal} is inaccurate. Note that
\begin{align*}
& \Xi^t_{i,j} := \text{``$v_i$ is susceptible and $v_j$ is infected at time $t$"}, 
\end{align*}
in equation (3.16) in \cite{ogura2019optimal} and $\Gamma^t_{i\to j}$ is as defined in \eqref{eq:def_Gammatij} above. Thus, $\Pb(\Gamma^t_{i \to j}|\Xi^t_{i,j})$ is the probability that the edge $(v_i,v_j)$ will be formed when initiated by the activated node $v_i$ when $v_i$ is susceptible and $v_j$ is infected. Thus, $\Pb(\Gamma^t_{i \to j}|\Xi^t_{i,j})$ is the product of $\Pb(v_i \text{ is activated while it is susceptible})$ and $\Pb((v_i,v_j) \text{ are neighbors when } v_j \text{ is infected})$. Following the definition of the A-SIS epidemic, we have 
\begin{align*}
\Pb(\Gamma^t_{i \to j}|\Xi^t_{i,j}) & = a_i \cdot \bar{m} \pi_j \neq \psi_i
\end{align*}
as $\psi_i = \bar{m} a_i \pi_i$. Since $v_j$ is infected at time due to the conditioning event $\Xi^t_{i,j}$, the probability of such an edge being formed is $\bar{m}\pi_j$ not $\bar{m}\pi_i$ as considered in \cite{ogura2019optimal}. {\it In other words, the probability that the edge $(v_i,v_j)$ will be formed when initiated by the activated node $v_i$ depends on the acceptance rate of the node $v_j$.} 

\smallskip 

\noindent {\bf Strengthening equation (3.21) in \cite{ogura2019optimal} when $i=j$:} 

\smallskip 

\sloppy 
\noindent The authors claim that the inequality 
$$ \Eb[(1-x_i(t))A_{ij}(t)x_j(t)] \leq [1-(1-\psi_i)(1-\phi_j)] p_j(t), $$
trivially holds when $i=j$. In fact, the event $\Xi^t_{i,i} = \text{``$v_i$ is susceptible and $v_i$ is infected at time $t$"}$ is empty and as a result $\Pb(\Xi^t_{i,i}) = 0$. Therefore, the bound can be strengthened by treating $ \Eb[(1-x_i(t))A_{ii}(t)x_i(t)] = 0$.

\smallskip

\noindent {\bf Implications:}

\smallskip

\noindent The above potential inaccuracy has significant implication on the bound derived in Theorem 3.3 in \cite{ogura2019optimal}. Specifically, the authors build upon Proposition 3.2 and show that the vector of infection probabilities evolves as
\begin{align*} 
p(t+1) & \!\leq \!\big[(1-\delta)\mathbf{I}_n \!+ \beta [\mathbf{1}_n \mathbf{1}_n^\top \!- (\mathbf{1}_n\!-\psi)(\mathbf{1}_n\!-\phi)^\top]\big] p(t) \\ & =: \FF p(t), 
\end{align*}
where $\mathbf{I}_n$ is the identity matrix and $\mathbf{1}_n$ is the vector of dimension $n$ with all entries being $1$. The authors then argue that the spectral radius of $\FF$, denoted $\rho(\FF)$, is an upper bound on the decay ratio with
$$ \rho(\FF) = 1-\delta + \beta \rho(\mathbf{1}_n \mathbf{1}_n^\top - (\mathbf{1}_n-\psi)(\mathbf{1}_n-\phi)^\top). $$
Since $\mathbf{1}_n \mathbf{1}_n^\top - (\mathbf{1}_n-\psi)(\mathbf{1}_n-\phi)^\top$ is a matrix of rank $2$, the authors could obtain an explicit expression on its spectral radius and consequently on the bound on the decay ratio. However, due to the potential inaccuracy highlighted above, the linear dynamics that bounds the evolution of the vector of infection probabilities is not necessarily a lower-ranked matrix. Furthermore, the contributions related to optimal resource allocation for containing the epidemic rely on the bound on the decay ratio and may no longer be applicable.

We now state the following theorem that addresses the above inaccuracy and generalizes the result in \cite{ogura2019optimal}. 

\begin{theorem}\label{thm:asis_main}
Consider a generalization of the activity-driven A-SIS model where node $v_i$, upon activation, chooses uniformly and randomly $m_i$ other nodes to connect to. Let $\bar{m}_i = m_i/(n-1)$, and for all $i, j$, define the constants
$$ \phi_i = \chi_i a_i \bar{m}_i, \qquad \psi_{ij} = a_i \bar{m}_i \pi_j. $$
Then,
\begin{equation}
p_i(t+1) \!\leq (1-\delta) p_i(t) \!+ \beta \sum_{j \neq i} [1\!-(1\!-\psi_{ij})(1\!-\phi_j)]p_j(t)
\end{equation}
for all nodes $v_i$ and $t \geq 0$. Furthermore, the decay ratio is upper bounded as $\alpha \leq \rho^* := \rho(\FF^*)$ or the spectral radius of the matrix $\FF^*$ with entries 
\begin{equation}
\FF^*_{ij} = 
\begin{cases}
& 1-\delta, \qquad \text{if $i=j$},
\\ & \beta \sum_{j \neq i} [1-(1-\psi_{ij})(1-\phi_j)], \qquad \text{if $i \neq j$}.
\end{cases}
\end{equation}
In particular, if $m_i = m$ for every node $v_i$, then the result holds with $\phi_i = \chi_i a_i \bar{m}$ and $\psi_{ij} = a_i \bar{m} \pi_j$ where $\bar{m} = m/(n-1)$. \hfill \oprocendsymbol
\end{theorem}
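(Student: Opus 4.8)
The plan is to run the same argument as in the proof of Theorem~\ref{thm:asiys_main}, specialized to the two-state susceptible/infected setting (formally the case $\nu=0$, $\delta_x=\delta$, $\beta_x=\beta$) while retaining the node-dependent fan-outs $m_i$. First I would write the Markovian recursion for the indicator $x_i(t)$ of ``$v_i$ infected at time $t$'', namely
\[
x_i(t+1) = (1-N_\delta)x_i(t) + (1-x_i(t))\Big[1 - \prod_{j\neq i}\big(1 - A_{ij}(t)x_j(t)N_\beta\big)\Big],
\]
take expectations on both sides, and apply the Weierstrass product inequality to the bracketed product to obtain
\[
p_i(t+1) \leq (1-\delta)\,p_i(t) + \beta\sum_{j\neq i}\Eb\big[(1-x_i(t))A_{ij}(t)x_j(t)\big].
\]

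The next step is to evaluate each expectation by conditioning on the event $\Xi^t_{ij}=$ ``$v_i$ susceptible and $v_j$ infected at time $t$'', so that $\Eb[(1-x_i(t))A_{ij}(t)x_j(t)] = \Pb(A_{ij}(t)=1\mid\Xi^t_{ij})\,\Pb(\Xi^t_{ij})$ and $\Pb(A_{ij}(t)=1\mid\Xi^t_{ij}) = 1-[1-\Pb(\Gamma^t_{i\to j}\mid\Xi^t_{ij})][1-\Pb(\Gamma^t_{j\to i}\mid\Xi^t_{ij})]$, exactly as in the proof of Theorem~\ref{thm:asiys_main}. The one place where care is needed — and it is precisely the point being corrected relative to \cite{ogura2019optimal} — is $\Pb(\Gamma^t_{i\to j}\mid\Xi^t_{ij})$: on $\Xi^t_{ij}$ the recipient $v_j$ is infected, so an edge $(v_i,v_j)$ solicited by $v_i$ is accepted with probability $\pi_j$, whence $\Pb(\Gamma^t_{i\to j}\mid\Xi^t_{ij}) = a_i\bar{m}_i\pi_j = \psi_{ij}$; symmetrically, $v_j$ being infected activates with probability $\chi_j a_j$ while the susceptible $v_i$ accepts with probability $1$, so $\Pb(\Gamma^t_{j\to i}\mid\Xi^t_{ij}) = \chi_j a_j\bar{m}_j = \phi_j$. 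Combined with $\Pb(\Xi^t_{ij})\leq p_j(t)$ and the observation that the $i=j$ contribution is zero because $\Xi^t_{ii}$ is empty, this yields the stated scalar inequality.

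For the decay-ratio bound I would gather the coefficients into the nonnegative matrix $\FF^*$ with diagonal entries $1-\delta$ and off-diagonal $(i,j)$ entries $\beta[1-(1-\psi_{ij})(1-\phi_j)]$, so that $p(t+1)\leq\FF^*p(t)$ componentwise; since $p(t)\geq 0$ and $\FF^*\geq 0$, induction gives $0\leq p(t)\leq(\FF^*)^tp(0)$. By the Gelfand formula for the spectral radius of the nonnegative matrix $\FF^*$ — the same ``standard argument'' invoked for Theorem~\ref{thm:asiys_sufficient} — for every $\gamma>\rho(\FF^*)$ there is a constant $C$ with $\|(\FF^*)^t\|\leq C\gamma^t$ for all $t$, hence $\|p(t)\|\leq C\gamma^t\|p(0)\|\leq C\sqrt{n}\,\gamma^t$, and therefore $\alpha\leq\rho(\FF^*)=:\rho^*$. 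The final sentence of the theorem is then just the substitution $m_i=m$, $\bar{m}_i=\bar{m}$ into the definitions of $\phi_i$ and $\psi_{ij}$. I do not expect a genuine obstacle: the proof is a near-verbatim specialization of Theorem~\ref{thm:asiys_main}, and the only subtle — but essential — point is recognizing that the acceptance probability of an edge solicited by $v_i$ is governed by the state of the \emph{recipient} $v_j$, which is exactly what turns the $\psi_i$ of \cite{ogura2019optimal} into $\psi_{ij}$ and prevents $\FF^*$ from being a rank-two perturbation of a diagonal matrix.
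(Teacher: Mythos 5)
Your proposal is correct and follows exactly the route the paper intends: the paper omits the proof but states that it ``largely mirrors the proof of Theorem~\ref{thm:asiys_main} and the analysis in \cite{ogura2019optimal} with the above discussed aspects incorporated,'' and your argument is precisely that specialization, including the two key corrections (acceptance probability $\pi_j$ of the \emph{recipient} giving $\psi_{ij}$, and the vanishing $i=j$ term). The only discrepancy is with the paper's displayed formula for $\FF^*_{ij}$, which carries a spurious $\sum_{j\neq i}$ in the off-diagonal case; your entrywise reading $\beta[1-(1-\psi_{ij})(1-\phi_j)]$ is the one consistent with the derivation.
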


The proof largely mirrors the proof of Theorem \ref{thm:asiys_main} and the analysis in \cite{ogura2019optimal} with the above discussed aspects incorporated. We omit it in the interest of space.

\begin{remark}
The upper bound on the decay ratio as shown above is the spectral radius of an $n \times n$ matrix. While $\FF^*$ has a larger dimension than the case shown in \cite{ogura2019optimal}, it is still a considerable improvement over the $2^n \times 2^n$ matrix that characterizes the exact decay ratio. From the Perron-Frobenius theorem, the spectral radius also coincides with the largest eigenvalue of $\FF$. We further note that the results in \cite{ogura2019optimal} continues to hold if the acceptance rate is homogeneous across all the nodes, i.e., $\pi_i = \pi, \forall v_i \in \VV$. \hfill \oprocendsymbol
\end{remark}

\section{Simulation Results}
\label{section:simulation}

In this section, we illustrate the evolution of the epidemic states in both A-SIYS and A-SIS epidemic models. 

\subsection{Impact of transition rate $\nu$ in A-SIYS Epidemic} 

\begin{figure*}[tb]
	\centering
	\includegraphics[scale=1.1]{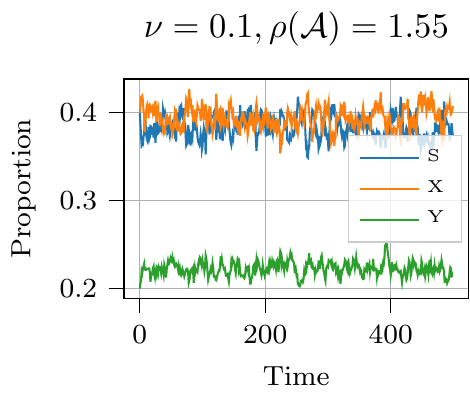}~~
	\includegraphics[scale=1.1]{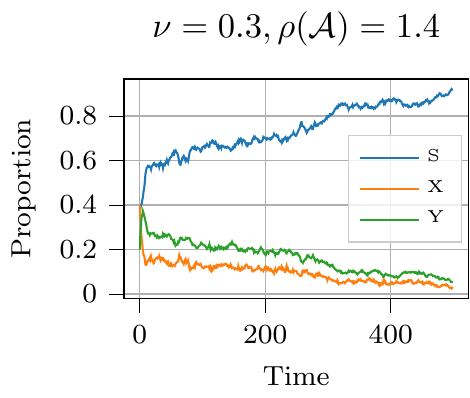}~~
	\includegraphics[scale=1.1]{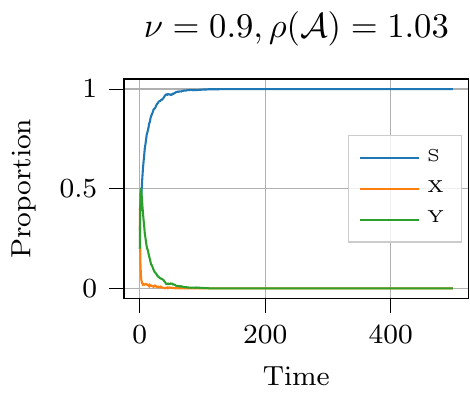}
	\caption{Evolution of proportion of susceptible (S), asymptomatic (X) and symptomatic (Y) nodes in the A-SIYS epidemic averaged across $50$ runs with parameters described in Example \ref{ex:asiys1}; $\nu$ denotes the rate at which an asymptomatic node becomes symptomatic and $\rho(\AA)$ is the bound on the decay ratio.}
	\label{fig:ex11}
\end{figure*}

\begin{figure*}[tb]
	\centering
	\includegraphics[scale=1]{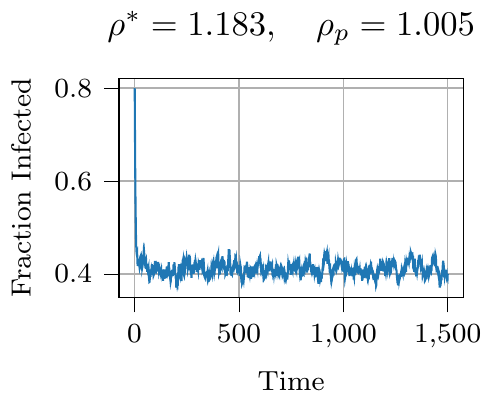}
	\includegraphics[scale=1]{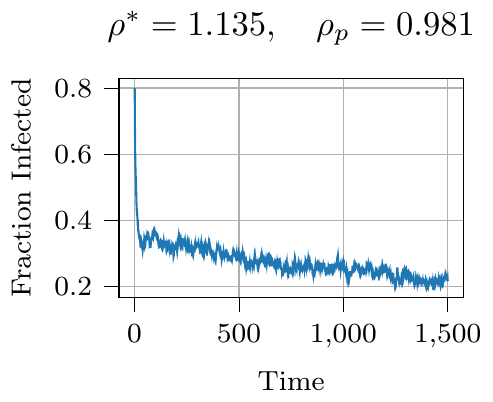}
	\includegraphics[scale=1]{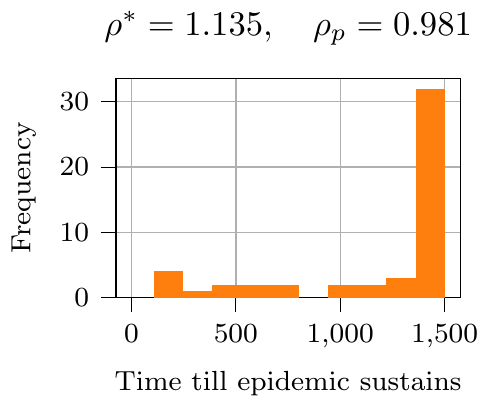} \\[2mm]
	\includegraphics[scale=1]{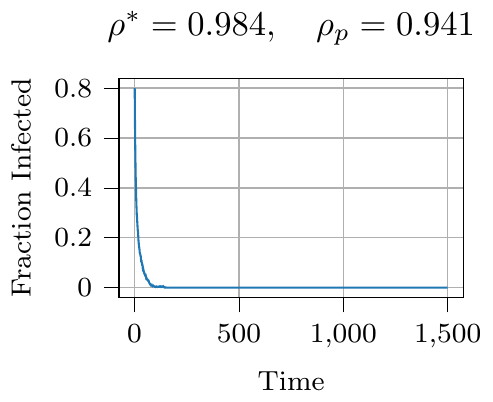}
	\includegraphics[scale=1]{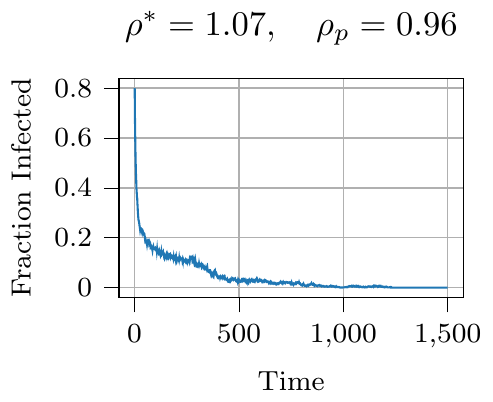}
	\includegraphics[scale=1]{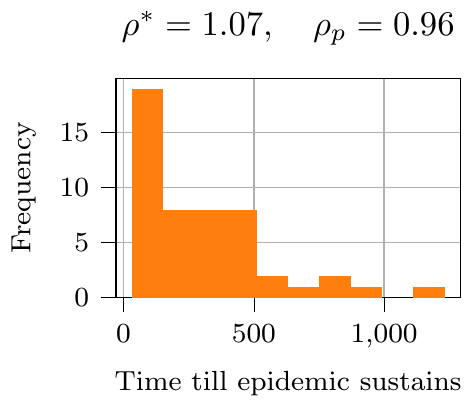}
	\caption{Evolution of fraction of infected nodes averaged across $50$ runs and histogram of time till the epidemic persists with parameters described in Example \ref{ex:1}.}
	\label{fig:ex1}
\end{figure*}

We first show the impact of asymptomatic carriers on the epidemic prevalence in the A-SIYS epidemic. 

\begin{example}\label{ex:asiys1}
We consider a set of $n = 50$ nodes and for each node set the rate of infection $\beta_x = \beta_y =  0.25$, rate of recovery $\delta_x = \delta_y =  0.15$, activity rate = $0.2$, adaptation factor = $0.05$, acceptance rate = $0.05$ and degree $m =  8$. We initialize with $20$ nodes being susceptible, $20$ nodes being asymptomatic and $10$ nodes being symptomatic and simulate the A-SIYS epidemic till $500$ time steps and $50$ independent runs. We show the evolution of the fraction of susceptible, asymptomatic and symptomatic nodes averaged over $50$ runs in Figure \ref{fig:ex11} for three different values of the transition rate $\nu = 0.1, 0.3, 0.9$. The upper bound on the decay ratio for these settings are $1.556, 1.4, 1.03$, respectively. 

Recall that $\nu$ captures the rate at which asymptomatic nodes become symptomatic. Given the above parameters, the adaptation and acceptance rates are negligible for symptomatic nodes. Therefore, as $\nu$ increases, we anticipate that nodes remain asymptomatic for a much shorter period of time and consequently the decay ratio will be small. For small values of $\nu$, nodes tend to remain asymptomatic for a longer period of time during which they continue to activate and connect at the same rate as a susceptible node; consequently, the epidemic sustains in the population. Figure \ref{fig:ex11} shows that our model captures the above phenomenon. \hfill \oprocendsymbol
\end{example}

\subsection{Evolution of the A-SIS epidemic and comparison with \cite{ogura2019optimal}}

We now numerically illustrate that the bound obtained in Theorem \ref{thm:asis_main} above better captures the evolution of the epidemic (fraction of infected population) compared to the bound on the decay ratio obtained in \cite{ogura2019optimal} which we denote by $\rho_p$. We consider two settings; one where $\rho_p$ is smaller than $\rho^*$ and second where $\rho_p$ is larger than $\rho^*$ as described in the following two examples, respectively.

\begin{figure*}[tb]
	\centering
	\includegraphics[scale=1]{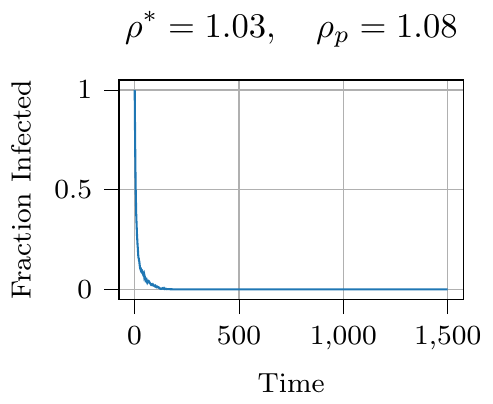}
	\includegraphics[scale=1]{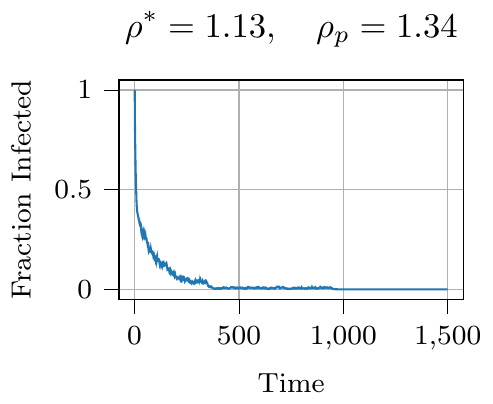}
	\includegraphics[scale=1]{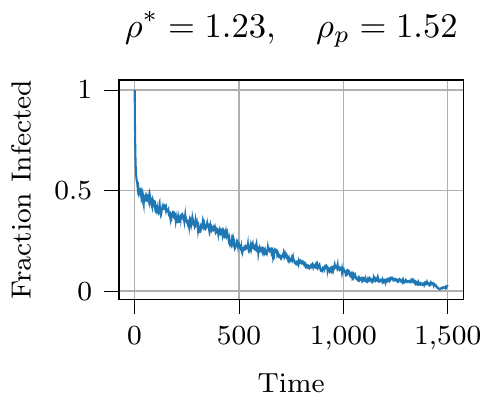} \\[2mm]
	\includegraphics[scale=1]{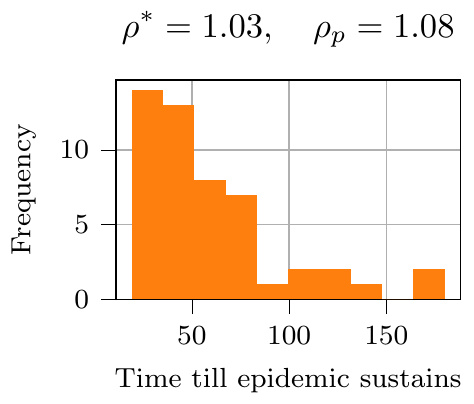}
	\includegraphics[scale=1]{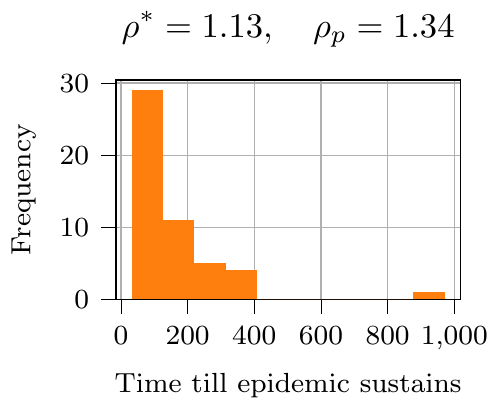}
	\includegraphics[scale=1]{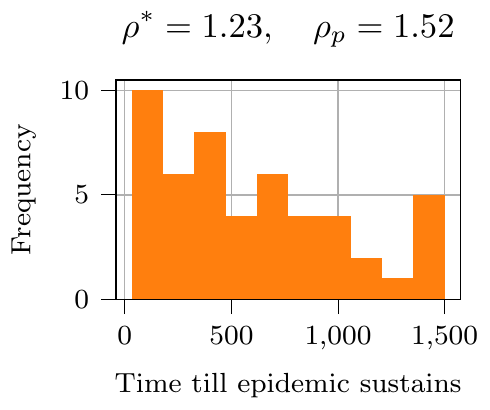}
	\caption{Evolution of fraction of infected nodes averaged across $50$ runs and histogram of time till the epidemic persists with parameters described in Example \ref{ex:2}.}
	\label{fig:ex2}
\end{figure*}

\begin{example}\label{ex:1}
We consider a set of $n=50$ nodes and set the infection rate $\beta = 0.2$, recovery rate $\delta = 0.15$ and $m=5$. We assume that $40$ out of $50$ nodes have activity rate $a_i = 0.1$, adaptation factor $\chi_i = 0.05$ and acceptance rate $\pi_i = 0.98$. For the remaining $10$ nodes, we choose $a_i = 0.6$ and $\pi_i = 0.03$ and vary the adaptation parameter which results in varying values of $\rho^*$ and $\rho_p$. We initialize with $40$ nodes being infected and $10$ nodes being susceptible and simulate the A-SIS epidemic. The average fraction of infected nodes across the $50$ independent runs for a duration of $1500$ time steps is reported in Figure \ref{fig:ex1}. The bounds $\rho^*$ and $\rho_p$ are shown in the titles of the plots. If at a given point of time (before the maximum time-step 1500), all nodes are susceptible (i.e., the underlying Markov chain has reached the disease-free absorbing state) then the simulation ends. 

The plot on the top left panel of Figure \ref{fig:ex1} corresponds to the case with $\chi_i = 0.95$ for the second group of $10$ nodes and shows that the epidemic sustains in the population. For other cases, the bounds are smaller and it results in the states reaching the disease-free absorbing state of the dynamics. We also plot the histogram of the time the simulation ends for $\chi_i = 0.7$ and $\chi_i = 0.4$ (for the second group of nodes) over the $50$ runs in the right panel of Figure \ref{fig:ex1}. We note that for these two cases, the epidemic sustains in the population despite the upper bound on the decay ratio obtained in \cite{ogura2019optimal} being smaller than $1$. In contrast, when $\rho^* < 1$ (bottom left panel), the epidemic reaches the absorbing state in less than $150$ iterations in all the $50$ independent runs. \hfill \oprocendsymbol
\end{example}

The above example shows that the epidemic sustains in the population even when $\rho_p < 1$ (but $\rho^* > 1$). In the following example, we consider parameters with $\rho^* < \rho_p$ and show that the epidemic reaches the disease-free state much faster when $\rho^*$ is close to $1$ even when $\rho_p$ is relatively large. 

\begin{example}\label{ex:2}
We consider a similar setting as above with a set of $n=50$ nodes and set the rate of infection $\beta = 0.4$, the rate of recovery $\delta = 0.15$ and $m=5$. We assume that $40$ out of $50$ nodes have activity rate $a_i = 0.1$, adaptation factor $\chi_i = 0.05$ and acceptance rate $\pi_i = 0.02$. For the remaining $10$ nodes, we choose $a_i = 0.6$ and $\pi_i = 0.75$ and vary the adaptation parameter (three values with $\chi_i= 0.05, 0.5$ and $0.95$) which results in varying values of $\rho^*$ and $\rho_p$.

We initialize with all nodes being infected and simulate the A-SIS epidemic $50$ times. The average fraction of infected nodes across the $50$ independent runs for a duration of $1500$ time steps and the histogram of the time the simulation ends are shown in the top and bottom panel of Figure \ref{fig:ex2}, respectively. As the bounds increase, the epidemic sustains for a longer time period. However, despite a relatively large value of $\rho_p$ (but with $\rho^*$ closer to $1$), the epidemic does not sustain for the entire duration in all the simulations. \hfill \oprocendsymbol
\end{example}

A stark contrast in results can be observed in the top left panel of Figure \ref{fig:ex1} and the top right panel of Figure \ref{fig:ex2}; in the former, the epidemic sustains for $\rho_p = 1.005$ while in the latter it reaches the disease-free state in most runs even when $\rho_p = 1.52$. To summarize, in both the examples considered above, the bound $\rho^*$ derived in our work better captures the evolution of the A-SIS epidemic. 

\section{Discussion and Conclusion}\label{sec:discussion}

In this paper, we propose a new activity-driven adaptive epidemic model that includes asymptomatic carriers present in several infectious diseases. In the proposed model, symptomatic individuals reduce their activation and acceptance probabilities while asymptomatic individuals do not, potentially because they are not aware of being infected. We show that the proposed model captures several existing epidemic models as special cases. We derive a linearized dynamics that upper bounds the exact Markovian evolution by resorting to a mean-field approximation. We also highlight a potential inaccuracy in the upper bound on the decay ratio derived in \cite{ogura2019optimal} for the A-SIS epidemic model and generalize their results. The simulation results illustrate that the bound derived in our work better captures the evolution of the A-SIS epidemic compared to the bound obtained in \cite{ogura2019optimal}. 

Our work is an early attempt to develop an epidemic model with asymptomatic carriers on temporal networks. There are several promising avenues for future research in this context. The condition based on the decay ratio is only sufficient to guarantee that the disease is quickly eradicated from the population. In contrast, in the classical SIS epidemic model, when the decay ratio is larger than $1$, there exists a unique endemic state that serves as an equilibrium of the mean-field dynamics. While we conjecture that the A-SIS and A-SIYS epidemic models would have a similar behavior, an analogous result has not yet been formally established. Similarly, developing scalable centralized and decentralized protection schemes for containing the epidemic in large-scale networks in presence of asymptomatic carriers is yet another challenging open problem.

\bibliographystyle{IEEEtran}
\bibliography{refs,refs_new}

\end{document}